\newtheorem{theorem}{Theorem}
\newtheorem{prob}{Problem}
\newtheorem{proposition}{Proposition}
\newtheorem{remark}{Remark}
\newtheorem{lemma}{Lemma}
\begin{document}

\title{An Exponential Bound in the Quest for Absolute Zero}

\author{Dionisis Stefanatos}
\email{dionisis@post.harvard.edu}
\affiliation{Division of Physical Sciences and Applications,\\
  Hellenic Army Academy, Vari, Athens 16673, Greece}

\date{\today}

\begin{abstract}
In most studies for the quantification of the third thermodynamic law, the minimum temperature which can be achieved with a long but finite-time process
scales as a negative power of the process duration. In this article, we use our recent complete solution for the optimal control problem of the quantum parametric oscillator to show that the minimum temperature which can be obtained in this system scales exponentially with the available time. The present work is expected to motivate further research in the active quest for absolute zero.
\end{abstract}

\pacs{05.70.Ln, 02.30.Yy}
\keywords{quantum thermodynamics, quantum control, absolute zero, quantum parametric oscillator}
\maketitle

\section{Introduction}

The third law of Thermodynamics, as formulated by W. Nernst in the form of the unattainability principle, states that no process can reduce the temperature of any system to absolute zero in a finite number of steps and within finite time \cite{Nernst12}. Several studies have been devoted to the quantification of this principle, i.e. finding the minimum temperature which can be obtained as a function of the duration of the cooling process \cite{Rezek09,Hoffmann_EPL11,Levy12,TorronKos13,Masanes17}. The minimum temperature found in most of these works scales as an inverse power of the available time.

In the present article, we use our recent complete solution for the optimal control problem of the quantum parametric oscillator \cite{Stefanatos2017a} and derive an exponential bound on the minimum achievable temperature with this system. Note that such a bound has been previously obtained for this system, but in the more general case where the harmonic potential can become repulsive \cite{Hoffmann_EPL11}. Here we prove the exponential bound for the more restrictive and practically interesting case where the stiffness of the parametric oscillator remains positive throughout.

\section{The quest for absolute zero using the quantum parametric oscillator}

The system that we consider in this article is an ensemble of noninteracting particles of mass $m$ trapped in a parametric harmonic potential \cite{Rezek09,Hoffmann_EPL11,Rezek06,Stefanatos2017a,Salamon09,Tsirlin11,Salamon12,Hoffmann13,Hoffmann15,Boldt16,Stefanatos2017b}. The corresponding Hamiltonian is
\begin{equation}
\label{Hamiltonian}
\hat{H}=\frac{\hat{p}^2}{2m}+\frac{m\omega^2(t)\hat{q}^2}{2},
\end{equation}
where $\hat{q}, \hat{p}$ are the position and momentum operators, respectively, and $\omega(t)$ is the time-varying frequency of the oscillator which serves as the available control and is restricted between a maximum and a minimum value
\begin{equation}
\label{frequency}
\omega_c\leq\omega(t)\leq\omega_h.
\end{equation}
This system can be considered as the quantum analog of a classical piston, with the frequency corresponding to the inverse volume of the classical case, since a larger frequency results to a tighter confinement of the particle wavefunction in space. This \emph{quantum piston} has been extensively used in the quest for absolute zero and other thermodynamic applications \cite{Chen10,Zulkowski12,shortcuts13,Deng13,Deffner13,Campo14,Abah16}; in this section we summarize some well known facts from the related literature, which can also be found in the recent review \cite{Kosloff17}.

We first describe the dynamics under Hamiltonian (\ref{Hamiltonian}). Recall from Quantum Mechanics that the time evolution of a quantum observable (hermitian operator) $\hat{O}$ in the Heisenberg picture is given by \cite{Merzbacher98}
\begin{equation}
\label{Observable}
\frac{d\hat{O}}{dt}=\frac{i}{\hbar}[\hat{H},\hat{O}]+\frac{\partial\hat{O}}{\partial t},
\end{equation}
where $\imath=\sqrt{-1}$ and $\hbar$ is Planck's constant.
The following operators form a closed Lie algebra \cite{Boldt16}
\begin{equation}
\label{z_operators}
\hat{z}_1=m\hat{q}^2,\quad \hat{z}_2=\frac{\hat{p}^2}{m},\quad \hat{z}_3=-\frac{\imath}{2\hbar}[\hat{z}_1,\hat{z}_2]=\hat{q}\hat{p}+\hat{p}\hat{q},
\end{equation}
while Hamiltonian (\ref{Hamiltonian}) is a linear combination of $\hat{z}_1$ and $\hat{z}_2$. As explained in \cite{Rezek06}, the state of the system under
evolution (\ref{Hamiltonian}) can be described by the expectation values
\begin{equation}
\label{expectations}
z_i=\langle\hat{z}_i\rangle=\mbox{Tr}(\rho_0\hat{z}_i),\quad i=1, 2, 3
\end{equation}
of these operators, where $\rho_0$ is the density matrix corresponding to the initial state of the system at $t=0$ (recall that we use the Heisenberg picture). The explicit relation between the state of the system (density matrix) and these three observables can be found in \cite{Rezek06}. From (\ref{Observable}) and (\ref{expectations}) we easily find the equations
\begin{equation}
\label{z}
\dot{z}_1 = z_3,\quad\dot{z}_2 = -\omega^2z_3,\quad\dot{z}_3 = -2\omega^2z_1+2z_2.
\end{equation}

We consider that initially (for time $t\leq 0$) the system is in thermal equilibrium with a hot bath at temperature $T_h$, while the frequency is fixed to the maximum allowed value $\omega_h$. In order to find the initial values of $z_i$ note that states of thermodynamic equilibrium, with $\omega(t)=\omega$ constant, are characterized by the equipartition of average energy $E=\langle\hat{H}\rangle$
\begin{equation}
\label{equipartition}
\left\langle\frac{\hat{p}^2}{2m}\right\rangle=\left\langle\frac{m\omega^2\hat{q}^2}{2}\right\rangle=\frac{E}{2}
\end{equation}
and the absence of correlations
\begin{equation}
\label{no_correlation}
\langle\hat{q}\hat{p}+\hat{p}\hat{q}\rangle=0.
\end{equation}
Starting at $t=0$ from the equilibrium state with temperature $T_h$ and frequency $\omega_h$, the corresponding average energy $E_0$ is
\[
E_0=\frac{\hbar\omega_h}{2}\coth\left(\frac{\hbar\omega_h}{2k_bT_h}\right),
\]
where $k_b$ is Boltzmann's constant. Using (\ref{equipartition}) and (\ref{no_correlation}) in (\ref{z_operators}) we find the initial conditions
\begin{equation}
\label{z_initial}
z_1(0)=E_0/\omega_h^2,\quad z_2(0)=E_0,\quad z_3(0)=0.
\end{equation}

At $t=0$ the system is isolated from the hot reservoir and $\omega(t)$ is varied within the interval (\ref{frequency}) until it reaches the minimum allowed value $\omega(\tau)=\omega_c$ for some final time $t=\tau$ yet unspecified, thus the frequency boundary conditions are
\begin{equation}
\label{frequency_boundary}
\omega(t)=\left\{\begin{array}{cl} \omega_h, & t\leq 0 \\\omega_c, & t\geq \tau\end{array}\right.
\end{equation}
As we explained above, the frequency of the potential corresponds to the inverse volume of a classical piston. If the frequency is appropriately reduced to its final value (corresponding to a classical increase in volume) while the system is isolated and does not exchange heat with its environment, it is generally expected the cooling of the trapped particles. In order to find the minimum achievable temperature with this process, we find the minimum possible energy at the final time. During the expansion of the quantum piston, the evolution is governed by the system (\ref{z}). As it can be easily verified, the following quantity, called the Casimir companion, is a constant of the motion \cite{Boldt13}
\begin{equation}
\label{constant}
z_1z_2-z_3^2/4=E_0^2/\omega_h^2.
\end{equation}
The instantaneous average energy $E=\langle\hat{H}\rangle$ during the expansion can be expressed as
\begin{equation}
\label{average_energy}
E=\frac{1}{2}(\omega^2z_1+z_2)
\end{equation}
and, if we solve with respect to $z_2$ and replace it in (\ref{constant}) we obtain
\[
\omega^2z_1^2-2Ez_1+\frac{z_3^2}{4}+\frac{E_0^2}{\omega_h^2}=0.
\]
This equation has real solutions with respect to $z_1$ when
\[
E\geq\sqrt{\frac{1}{4}\omega^2z_3^2+\frac{\omega^2}{\omega_h^2}E_0^2},
\]
thus
\[
E\geq\frac{\omega}{\omega_h}E_0\geq\frac{\omega_c}{\omega_h}E_0,
\]
where the equality holds for $z_3=0$ and $\omega=\omega_c$. The energy $E_f$ at the final time $t=\tau$, where $\omega=\omega_c$, achieves the lower bound for $z_3(\tau)=0$. From this last relation and (\ref{constant}), (\ref{average_energy}) we obtain the terminal conditions at the final time $t=\tau$
\begin{equation}
\label{z_final}
z_1(\tau)=E_f/\omega_c^2,\quad z_2(\tau)=E_f,\quad z_3(\tau)=0,
\end{equation}
corresponding to the minimum value for the final energy $E_f$
\begin{equation}
\label{final_energy}
E_f=\frac{\omega_c}{\omega_h}E_0=\frac{\hbar\omega_c}{2}\coth\left(\frac{\hbar\omega_h}{2k_bT_h}\right).
\end{equation}
Observe from (\ref{z_final}) that the final state is also in thermal equilibrium, and from (\ref{final_energy}) the corresponding internal or effective temperature $T_c$ can be identified as
\begin{equation}
\label{T_c}
T_c=\frac{\omega_c}{\omega_h}T_h.
\end{equation}
Since $\omega_c<\omega_h$, this obviously corresponds to the cooling of the trapped particles.

Having determined the minimum achievable temperature under condition (\ref{frequency}), it is then natural to ask how should we choose $\omega(t)$ in order to reach it. The trivial answer is to decrease the frequency from the initial value $\omega_h$ to the final $\omega_c$ following a slow (adiabatic) process. In this case $z_3=0$ throughout the process and the system moves along thermal equilibrium states which, according to (\ref{constant}), lie on the hyperbola
\[
z_1z_2=E_0^2/\omega_h^2
\]
in the $z_1z_2$-plane. From this relation we find that the instantaneous average energy $E$ and frequency $\omega$ satisfy
\[
\label{energy_ratio}
E=\frac{\omega}{\omega_h}E_0,
\]
which is nothing more than the well known adiabatic invariant $J=E/\omega$ of the harmonic oscillator. At the final time $t=\tau$, where $\omega=\omega_c$, the desired minimum energy (\ref{final_energy}) and temperature (\ref{T_c}) are obtained.

The problem with the slow adiabatic process is that it requires long times, an undesirable characteristic which makes it impractical, thus alternative approaches are needed. In a highly influential paper \cite{Salamon09}, the authors suggested to reach the minimum temperature with the following finite-time process
\begin{equation}
\label{frequency_profile}
\omega(t)=\left\{\begin{array}{cl} \omega_c, & 0<t\leq \tau_c \\\omega_h, & \tau_c<t<\tau=\tau_c+\tau_h\end{array}\right.,
\end{equation}
with duration
\begin{eqnarray*}
\label{one_intermediate}
\tau &=& \tau_c+\tau_h\nonumber\\
&=&\frac{1}{2\omega_c}\cos^{-1}\left[\frac{\omega_c^2+\omega_h^2}{(\omega_c+\omega_h)^2}\right]+\frac{1}{2\omega_h}\cos^{-1}\left[\frac{\omega_c^2+\omega_h^2}{(\omega_c+\omega_h)^2}\right].
\end{eqnarray*}
Note that there is an intermediate switching at $t=\tau_c$ from $\omega_c$ to $\omega_h$, while there are also instantaneous jumps at the initial and final times, in order to satisfy the boundary conditions (\ref{frequency_boundary}). In the limit $\omega_c\rightarrow 0$ and for finite $\omega_h$ the duration of the process approaches
\begin{equation}
\label{tau_limit}
\tau\rightarrow\frac{1}{\sqrt{\omega_h\omega_c}}.
\end{equation}
From (\ref{T_c}) observe that $T_c\sim\omega_c$, thus in the limit $\omega_c\rightarrow 0$ and for finite $\omega_h,T_h$ it is also $T_c\rightarrow 0$, while
\begin{equation}
\label{min_time_1}
\tau\rightarrow \frac{1}{\omega_h}\sqrt{\frac{T_h}{T_c}}
\end{equation}
and
\begin{equation}
\label{min_energy_1}
T_c\rightarrow \frac{T_h}{\omega_h^2\tau^2}.
\end{equation}
Eq. (\ref{min_time_1}) expresses how the process duration diverges when the temperature $T_c$ approaches the absolute zero, while Eq. (\ref{min_energy_1}) indicates the minimum temperature that can be achieved by a process of the form (\ref{frequency_profile}) with duration $\tau$.

The pulse sequence (\ref{frequency_profile}) provides the minimum-time cooling solution for small values of the frequency ratio $\omega_h/\omega_c>1$. For larger values of this ratio, we recently showed in \cite{Stefanatos2017a} that there might be pulse sequences with more intermediate switchings which achieve faster cooling, and provided a specific such example where the optimal solution contains three switchings. In the next section we use our complete solution for the optimal control problem of the quantum parametric oscillator \cite{Stefanatos2017a}, and obtain a stricter logarithmic bound for the cooling time in the limit $\omega_c\rightarrow 0$, where the ratio $\omega_h/\omega_c$ becomes large for finite $\omega_h$.

\section{Implications of the minimum-time solution}

We start by presenting the minimum-time solution given in \cite{Stefanatos2017a}.
If we define the dimensionless variable $b$ through the relations
\[
\label{b}
b=\frac{\sqrt{\langle\hat{q}^2\rangle}}{q_0},\quad q_0=\sqrt{\frac{E_0}{m\omega_h^2}},
\]
where $q_0$ has length dimensions, then the expectation values $z_i$ can be expressed in terms of $b$ and its derivatives as follows
\begin{equation}
\label{zeta}z_1=\frac{E_0}{\omega_h^2}b^2,\quad z_2=\frac{E_0}{\omega_h^2}(b\ddot{b}+\dot{b}^2+\omega^2b^2),\quad z_3=\frac{2E_0}{\omega_h^2}b\dot{b}.
\end{equation}
If we plug (\ref{zeta}) in (\ref{constant}), we obtain the following Ermakov equation for $b(t)$ \cite{Chen10}
\begin{equation}
\label{Ermakov}\ddot{b}(t)+\omega^{2}(t)b(t)=\frac{\omega_{h}^{2}}{b^{3}(t)}%
\end{equation}
The boundary conditions for $b$ can be found by using (\ref{zeta}) in (\ref{z_initial}) and (\ref{z_final}). They are
\begin{equation}
\label{b_boundary}b(0)=1,\quad \dot{b}(0)=0,\quad b(\tau)=\sqrt{\omega_h/\omega_c},\quad \dot{b}(\tau)=0,
\end{equation}
where we have additionally used (\ref{final_energy}) in the derivation of $b(\tau)$.

If we set
\begin{equation}
\label{x}
x_{1}=b,\quad x_{2}=\frac{\dot{b}}{\omega_{h}},\quad u(t)=\frac{\omega^{2}(t)}%
{\omega_{h}^{2}},
\end{equation}
and rescale time according to $t_{\mbox{new}}=\omega_{h} t_{\mbox{old}}$ while keeping the same notation for the normalized time, we
obtain the following system of first order differential equations, equivalent
to the Ermakov equation 
\begin{eqnarray}
\label{system1}\dot{x}_{1}  &  =  & x_{2},\\
\label{system2}\dot{x}_{2}  &  =  & -ux_{1}+\frac{1}{x_{1}^{3}}.
\end{eqnarray}
The control bounds are $u_1=\omega_c^2/\omega_h^2, u_2=\omega_h^2/\omega_h^2=1$,
and if we set
\begin{equation}
\label{gamma}
\gamma=\sqrt{\frac{\omega_h}{\omega_c}}>1
\end{equation}
they become
\begin{equation}
\label{u_order}
u_1=\frac{1}{\gamma^4}<1,\quad u_2=1.
\end{equation}
Using (\ref{x}) to translate the boundary conditions (\ref{b_boundary}) for $b$ into corresponding conditions for $x_1, x_2$, we obtain the following time-optimal problem for system (\ref{system1}), (\ref{system2}):
\begin{prob}\label{problem1}
Find $u_1\leq u(t)\leq u_2$, with $u_1=1/\gamma^4$ and $u_2=1$, such that starting from $(x_1(0),x_2(0))=(1,0)$, the system above reaches the final point $(x_1(\tau),x_2(\tau))=(\gamma,0), \gamma>1$, in minimum time $\tau$.
\end{prob}

In our recent work \cite{Stefanatos2017a}, we solved this problem for the more general case where $u_1\leq 1/\gamma^4$ and $u_2\geq 1$. Here we present the solution when the control bounds are fixed as in (\ref{u_order}), corresponding to the bounds in (\ref{frequency}).
\begin{theorem}
\label{solution}
The optimal control $u(t)$ has the bang-bang form, i.e. alternates between the boundary values $u_1$ and $u_2$, with an odd number of switchings, starting with $u=u_1$ and ending with $u=u_2$. The ratio of the coordinates $(x_2/x_1)$ of consecutive switching points has constant magnitude but alternating sign, while these points are not symmetric with respect to $x_1$-axis. The square of this ratio, $s=x_2^2/x_1^2$, is obviously constant at the switching points. The necessary time to reach the
target point $(\gamma,0), \gamma>1$, with a candidate optimal trajectory (extremal) with $2n+1$ switchings, $n=0,1,2,\ldots$, is
\begin{equation}
\label{time_odd}
\tau^{\pm}_{2n+1}=\tau^{\pm}_{i}+n(\tau_{u_1}+\tau_{u_2})+\tau_{f},
\end{equation}
where
\begin{eqnarray}
\label{time_in1}\tau^{\pm}_{i}  &  = & \frac{1}{2\sqrt{u_{1}}}\cos^{-1}\left[
\frac{sc_{1}\mp u_{1}\sqrt{c_{1}^{2}-4(s+u_{1})}}{(s+u_{1})\sqrt{c_{1}%
^{2}-4u_{1}}}\right],\\
\label{time_fi}
\tau_{f}  &  = & \frac{1}{2\sqrt{u_{2}}}\cos^{-1}\left[  \frac{-sc+u_{2}%
\sqrt{c^{2}-4(s+u_{2})}}{(s+u_{2})\sqrt{c^{2}-4u_{2}}}\right]  ,
\end{eqnarray}
\begin{eqnarray}
\label{switch_X}
\tau_{u_1}  &  = & \frac{1}{2\sqrt{u_{1}}}\cos^{-1}\left(
\frac{s-u_{1}}{s+u_{1}}\right),\\
\label{switch_Y}
\tau_{u_2}  &  = & \frac{1}{2\sqrt{u_{2}}}\left[  2\pi-\cos^{-1}\left(  \frac
{s-u_{2}}{s+u_{2}}\right)  \right]  ,
\end{eqnarray}
\begin{eqnarray}
\label{c_1}
c_{1}  &  = & u_{1}+1,\\
\label{c}
c  &  = & u_{2}\gamma^{2}+\frac{1}{\gamma^{2}},
\end{eqnarray}
and  the ratio $s$ satisfies the transcendental equation
\begin{equation}
\label{transcendentalX}\frac{c+\sqrt{c^{2}-4(s+u_{2})}}{c_{1}%
\pm\sqrt{c_{1}^{2}-4(s+u_{1})}}=\left(\frac{s+u_{2}}{s+u_{1}}\right)^{n+1}%
\end{equation}
in the interval $0<s\leq s_m$, where
\begin{equation}
s_m=\frac{1}{4}(1-u_{1})^{2}.
\end{equation}
Note that the $\pm$ sign in (\ref{transcendentalX}) corresponds to the $\pm$ sign in (\ref{time_odd}). The constants $c_{1}$ and
$c$ characterize the first and the last segments, respectively, of
the trajectory.
\end{theorem}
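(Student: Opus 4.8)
\emph{Proof plan.} I would apply Pontryagin's Maximum Principle (PMP) to Problem~\ref{problem1}, reduce the motion under each fixed control to a uniform rotation in a suitable plane, concatenate the resulting circular arcs at the switching points, and finally read off both the admissible range of $s$ and the transcendental equation from the requirement that the concatenation meet the boundary data. Existence of a time-optimal control I would get from Filippov's theorem: the control set is compact, the dynamics (\ref{system1})--(\ref{system2}) are affine in $u$, and admissible trajectories reaching $(\gamma,0)$ in bounded time stay in a compact subset of $\{x_1>0\}$ because $E_u=\tfrac12x_2^2+\tfrac u2x_1^2+\tfrac1{2x_1^2}$ is conserved on each constant-$u$ arc and obeys $E_u\ge\sqrt u$ by the arithmetic--geometric mean inequality, while the $x_1^{-3}$ term acts as a repulsive barrier at the origin. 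Writing the Pontryagin function $\mathcal H=p_1x_2+p_2(-ux_1+x_1^{-3})-p_0$ with $\dot p_1=p_2(u+3x_1^{-4})$, $\dot p_2=-p_1$, the switching function is $-p_2x_1$; since $x_1>0$ the control switches exactly at the zeros of $p_2$, and a singular arc would force $p_2\equiv p_1\equiv0$, hence $p_0=0$, contradicting nontriviality, so the control is bang-bang. Evaluating $\mathcal H\equiv0$ at the two endpoints (where $x_2=0$) gives $p_2(0)=p_0/(1-u_1)>0$ and $p_2(\tau)=p_0/(\gamma^{-3}-\gamma)<0$ in the normal case, which forces the first bang to be $u=u_1$ and the last $u=u_2$ --- consistent with $(\gamma,0)$ being the equilibrium of the $u_1$-flow, which no finite $u_1$-arc can reach --- and, together with the alternation, forces an odd number $2n+1$ of switchings.

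\textbf{Reducing the arcs to rotations.} Next I would put $\xi=x_1^2$ and check from (\ref{system1})--(\ref{system2}) that $\ddot\xi+4u\xi=4E_u$, so that in the coordinates $(x_1^2,\,x_1x_2/\sqrt u)$ a constant-$u$ arc is a uniform rotation with angular velocity $2\sqrt u$ about the centre $(E_u/u,0)$ on a circle of radius $R_u$ with $R_u^2=(E_u^2-u)/u^2$; the turning points $x_2=0$ are exactly where this circle meets the horizontal axis. Evaluating $c_u:=2E_u$ at $(1,0)$ and at $(\gamma,0)$ identifies the constants $c_1=u_1+1$ and $c=u_2\gamma^2+\gamma^{-2}$ of the statement as the $c_u$ of the first and last arcs. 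At any point of a $u$-arc with $s=x_2^2/x_1^2$ one has $c_u=(s+u)x_1^2+x_1^{-2}$, whence $x_1^2=\big(c_u\pm\sqrt{c_u^2-4(s+u)}\big)/\big(2(s+u)\big)$; plugging this into $\cos\theta=(x_1^2-E_u/u)/R_u$ gives the angle $\theta$ swept from the adjacent turning point, and $\theta/(2\sqrt u)$ reproduces $\tau^{\pm}_i$ in (\ref{time_in1}) (first arc, $u=u_1$, $c_u=c_1$) and $\tau_f$ in (\ref{time_fi}) (last arc, $u=u_2$, $c_u=c$), with the two signs labelling the two points of a circle at a given $|x_2/x_1|$.

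\textbf{The switching geometry (the hard part).} The crux is to show that consecutive switching points share the same value of $s=x_2^2/x_1^2$ with opposite sign of $x_2/x_1$ --- i.e.\ that the switching locus of an extremal is the pair of rays $x_2=\pm\sqrt s\,x_1$ (symmetric as a set about the $x_1$-axis, though not visited at mirror-image points). The sign alternation is quick: at a switch $p_2=0$ and $\mathcal H\equiv0$ gives $p_1x_2=p_0$, while the requirement that $p_2$ cross zero in the direction demanded by the control change fixes the sign of $\dot p_2=-p_1$, hence of $x_2$. The constancy of $|x_2/x_1|$ is where the work is: $x_2=\dot x_1$ and $p_2$ both solve the variational (Hill) equation $\ddot y+(u+3x_1^{-4})y=0$ on each arc, with constant Wronskian $p_2\dot x_2-\dot p_2x_2=p_0$; Sturm's theorem then puts exactly one turning point between consecutive switches, and I would combine the monotonicity $\tfrac{d}{dt}(p_2/x_2)=-p_0/x_2^2$ with the isochrony of the Pinney orbits (the variational equation has identity monodromy over the period $\pi/\sqrt u$) and the reflection symmetry $(x_1,x_2,t)\mapsto(x_1,-x_2,-t)$ of each constant-$u$ flow to pin the switch to a fixed $|x_2/x_1|$. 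I expect this to be the main obstacle. Once it is in hand, the angle swept on a $u$-arc between two consecutive switches turns out to depend only on $s$ (the $c_u$-dependence cancels) and equals $\cos^{-1}\!\big((s-u_1)/(s+u_1)\big)$ for $u=u_1$ and $2\pi-\cos^{-1}\!\big((s-u_2)/(s+u_2)\big)$ for $u=u_2$; dividing by $2\sqrt u$ gives $\tau_{u_1},\tau_{u_2}$ of (\ref{switch_X})--(\ref{switch_Y}), and summing the first arc, $n$ blocks $\tau_{u_1}+\tau_{u_2}$, and the last arc yields (\ref{time_odd}).

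\textbf{Closing the trajectory.} Finally, continuity of $(x_1,x_2)$ at each switch together with the conservation of $E_u$ on the intervening arc yields a multiplicative recursion for $x_1^2$ at successive switching points; iterating it through the $2n+1$ switches and imposing the data at $(1,0)$ and $(\gamma,0)$ produces (\ref{transcendentalX}), the power $n+1$ counting the $u_1\to u_2$ passages and the numerator and denominator carrying the end constants $c$ and $c_1$. For the range, the first $u_1$-arc leaves $(1,0)$ and must reach the ray $x_2=\sqrt s\,x_1$ before returning; since $x_2^2/x_1^2$ attains on that circle the maximum $c_1^2/4-u_1=(1-u_1)^2/4=s_m$ (and the intermediate arcs meet the analogous requirement automatically, by the arithmetic--geometric mean inequality applied to $(s+u_1)x_1^2+x_1^{-2}\ge2\sqrt{s+u_1}$), this is possible exactly for $0<s\le s_m$. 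Assembling the four steps gives the theorem; among the finitely many candidate extremals so produced (indexed by $n$ and the sign), the genuinely optimal one is singled out by a further comparison, which I would treat separately.
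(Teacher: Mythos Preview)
The paper does not prove this theorem independently: its entire ``proof'' is a citation to Theorem~2 of the author's earlier paper~\cite{Stefanatos2017a}, supplemented by two specialisations to the present bounds $u_1=1/\gamma^4$, $u_2=1$: (i) $(1,0)$ is the equilibrium of the $u_2$-flow, so every extremal must begin with $u=u_1$; (ii) the general range $0<s\le\min\{(1-u_1)^2/4,\,(u_2\gamma^2-\gamma^{-2})^2/4\}$ collapses to $s_m=(1-u_1)^2/4$ because $u_2\gamma^2-\gamma^{-2}=\gamma^2(1-u_1)>1-u_1$. Your proposal, by contrast, is a from-scratch derivation of the cited result, and its architecture is correct and in fact coincides with the strategy of~\cite{Stefanatos2017a}: Filippov for existence; PMP with switching function $-p_2x_1$; exclusion of singular arcs via $(p_1,p_2,p_0)=0$; the equilibrium argument forcing first bang $u_1$ and last bang $u_2$, hence $2n+1$ switchings; the linearisation $\ddot\xi+4u\xi=4E_u$ with $\xi=x_1^2$ turning each arc into a uniform rotation; and the reality constraint on the first arc giving $s_m$. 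Your identification of $c_1,c$ and the arc-time formulas from the swept angle is also sound.

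The one place where your plan is genuinely incomplete is exactly where you flag it: the constancy of $s=x_2^2/x_1^2$ at consecutive switching points. Your observations that $x_2$ and $p_2$ both solve the Hill equation $\ddot y+(u+3x_1^{-4})y=0$ with Wronskian $p_0$, and that $p_1x_2=p_0$ at each switch (whence the sign alternation), are correct and are the right ingredients. But ``isochrony plus reflection symmetry'' is not yet a proof that $|x_2/x_1|$ is preserved across a switch; what closes the argument in~\cite{Stefanatos2017a} is to track state and adjoint together in the quadratic coordinates $(x_1^2,\,2x_1x_2)$, where the constant-$u$ flow becomes a rigid rotation of the full state--costate pair, so that the switching set $\{p_2=0\}$ is carried to itself and the ratio $x_2/x_1$ at a switch is fixed by the single adjoint parameter $s$. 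Once that step is made rigorous, your multiplicative recursion for $x_1^2$ across the $2n+1$ switches produces~(\ref{transcendentalX}) exactly as you describe. In short: your route is the right one and is essentially the proof behind the cited theorem; the present paper simply imports the finished statement rather than reproving it.
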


\begin{proof}
Theorem \ref{solution} is actually Theorem 2 from \cite{Stefanatos2017a}, with the appropriate modifications to account for the specific values of the control bounds given in (\ref{u_order}). Here we highlight the most important points. For $u_2=1$, the starting point $(1,0)$ is an equilibrium point for system (\ref{system1}), (\ref{system2}), thus the candidate optimal trajectories (extremals) should start with $u=u_1$. The solution of the transcendental equation in Theorem 2 from \cite{Stefanatos2017a} is restricted in the interval $0<s\leq\mbox{Min}\{(1-u_{1})^{2}/4, (u_{2}\gamma^2-1/\gamma^2)^{2}/4\}$, which is simplified to $0<s\leq (1-u_{1})^{2}/4$ for the control bounds given in (\ref{u_order}).
\end{proof}

\begin{remark}
We emphasize that the above described candidate optimal control $u(t)$ does not include the jumps at the initial and final times, which should be included in order to satisfy the boundary conditions (\ref{frequency_boundary}) for the frequency.
\end{remark}

In Fig. \ref{extremal} we depict a typical extremal trajectory, where the blue solid line corresponds to segments with $u(t)=u_1$ and the red dashed line to segments with $u(t)=u_2$. Times $\tau_i$ and $\tau_f$ are the times spent on the initial blue and final red segments, respectively, while $\tau_{u_1}$ $(\tau_{u_2}$) is the time spent on each intermediate blue (red) segment, where $u=u_1$ ($u=u_2$). The physical intuition behind these ``spiral" optimal solutions can be understood if Eqs. (\ref{system1}), (\ref{system2}) are interpreted as describing the one-dimensional Newtonian motion of a unit mass particle, with $x_1$ being the displacement and $x_2=\dot{x}_1$ the velocity. Under this point of view, observe from (\ref{system2}) that there is a strong repulsive force as $x_1\rightarrow0$ ($1/x_1^3$), which can act as a slingshot and push the particle faster to the target point.
\begin{figure}[t]
\centering
\includegraphics[width=8 cm]{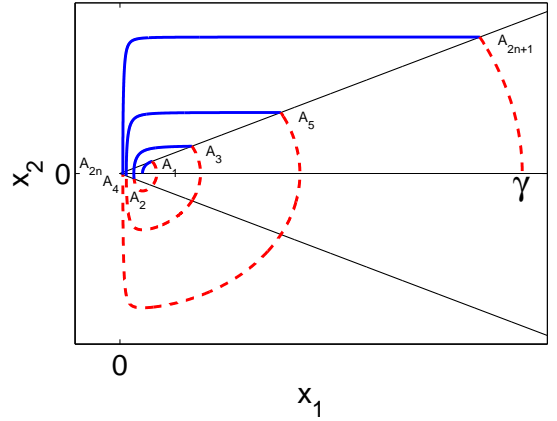}
\caption{A typical extremal trajectory. Blue solid line corresponds to segments with constant control $u=u_1$ and red dashed line to segments with $u=u_2$. The switching points lie on two straight lines passing through the origin, which are symmetric with respect to $x_1$-axis.}
\label{extremal}
\end{figure}

In this section we will use Theorem \ref{solution} to study the behavior of the extremal times in the limit $\omega_c\rightarrow 0$, corresponding to $\gamma\rightarrow\infty$. We will need the following lemma, regarding the monotonicity of the left and right hand sides of the transcendental equation (\ref{transcendentalX}).

\begin{lemma}
\label{monotonicity}
Let $l_{\pm}(s), r_n(s)$ denote the left and right hand sides of the transcendental equation (\ref{transcendentalX}). Then, $l_{-}(s)$ and $r_n(s)$ are decreasing functions of $s$, while $l_{+}(s)$ is increasing.
\end{lemma}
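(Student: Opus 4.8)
The plan is to handle the three functions separately, working from their explicit forms read off from \eqref{transcendentalX},
\[
l_{\pm}(s)=\frac{c+\sqrt{c^{2}-4(s+u_{2})}}{c_{1}\pm\sqrt{c_{1}^{2}-4(s+u_{1})}},\qquad r_{n}(s)=\left(\frac{s+u_{2}}{s+u_{1}}\right)^{n+1},
\]
on the interval $0<s\le s_{m}=(1-u_{1})^{2}/4$. I would first abbreviate $A(s)=\sqrt{c^{2}-4(s+u_{2})}$ and $B(s)=\sqrt{c_{1}^{2}-4(s+u_{1})}$, note that $A'=-2/A$ and $B'=-2/B$ (so both are strictly decreasing), and record two facts. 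First, $0\le B<c_{1}$ on the whole interval (the lower bound is clear; the upper one because $4(s+u_{1})>0$), so the denominator $c_{1}-B$ of $l_{-}$ is positive, while $c_{1}+B$ is trivially positive. Second, the difference
\[
A^{2}-B^{2}=c^{2}-c_{1}^{2}-4(u_{2}-u_{1})
\]
is \emph{independent of $s$}, and substituting $u_{2}=1$, $c_{1}=1+1/\gamma^{4}$, $c=\gamma^{2}+1/\gamma^{2}$ gives $A^{2}-B^{2}=(\gamma^{4}-1)^{3}/\gamma^{8}=(1-u_{1})^{3}/u_{1}>0$; in particular $A^{2}=B^{2}+(1-u_{1})^{3}/u_{1}>0$, so $A>0$ throughout and all three quantities are well defined and positive.

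The function $r_{n}$ is the easiest: writing $(s+u_{2})/(s+u_{1})=1+(u_{2}-u_{1})/(s+u_{1})$ shows it is strictly decreasing because $u_{2}>u_{1}$, and composing with the increasing map $t\mapsto t^{\,n+1}$ on $(0,\infty)$ keeps it decreasing. For $l_{-}$ I would use the logarithmic derivative: since $(c+A)'=-2/A$ and $(c_{1}-B)'=+2/B$,
\[
\frac{d}{ds}\ln l_{-}(s)=-\frac{2}{A(c+A)}-\frac{2}{B(c_{1}-B)}<0,
\]
both terms being negative, so $l_{-}$ is strictly decreasing. (Equivalently: $c+A$ is positive and decreasing while $c_{1}-B$ is positive and increasing.)

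The only delicate case, and the main obstacle, is $l_{+}$, where the numerator $c+A$ and the denominator $c_{1}+B$ are \emph{both} positive and decreasing, so monotonicity of the quotient is not visible by inspection. Here the logarithmic derivative gives
\[
\frac{d}{ds}\ln l_{+}(s)=\frac{2}{B(c_{1}+B)}-\frac{2}{A(c+A)},
\]
which is positive precisely when $A(c+A)>B(c_{1}+B)$, i.e.\ when $Ac>Bc_{1}$ and $A^{2}>B^{2}$ simultaneously. The second inequality is exactly the $s$-independent identity above; the first follows from it together with $c>c_{1}>0$ (indeed $c=c_{1}\gamma^{2}$), since $A>B\ge 0$ and $c>c_{1}$ force $Ac>Bc_{1}$. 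Hence $l_{+}$ is strictly increasing. The whole argument thus reduces to the observation that $A(s)^{2}-B(s)^{2}$ is a positive constant on the admissible interval — once that is in hand, the sign determinations for $l_{-}$, $l_{+}$, and $r_{n}$ are routine.
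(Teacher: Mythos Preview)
Your proof is correct and follows essentially the same route as the paper: compute derivatives, note that the signs for $r_n$ and $l_{-}$ are immediate, and reduce the $l_{+}$ case to the inequality $A>B$ (equivalently $A^2-B^2>0$), which both you and the paper verify by direct substitution of $c,c_1,u_1,u_2$ in terms of $\gamma$. The only cosmetic differences are your use of the abbreviations $A,B$ and logarithmic derivatives, and your decomposition $A(c+A)>B(c_1+B)$ into $Ac>Bc_1$ plus $A^2>B^2$, where the paper instead multiplies $1/B>1/A$ by $c+A>c_1+B$; these are algebraically equivalent rearrangements of the same inequality.
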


\begin{proof}
We find the derivatives of these functions with respect to $s$. It is
\begin{equation}
r_n'(s)=(n+1)\frac{u_1-u_2}{(s+u_1)^2}\left(\frac{s+u_2}{s+u_1}\right)^n<0,\nonumber
\end{equation}
since $u_1<u_2$, and
\begin{eqnarray*}
l'_{\pm}(s) & = & \frac{2}{\left[c_1\pm\sqrt{c_1^2-4(s+u_1)}\right]^2}\\
 & &\times\left[\pm\frac{c+\sqrt{c^2-4(s+u_2)}}{\sqrt{c_1^2-4(s+u_1)}}-\frac{c_1\pm\sqrt{c_1^2-4(s+u_1)}}{\sqrt{c^2-4(s+u_2)}}\right].
\end{eqnarray*}
Obviously it is $l'_{-}(s)<0$, while we will show that $l'_{+}(s)>0$. Observe that
\begin{equation}
\sqrt{c^2-4(s+u_2)}>\sqrt{c_1^2-4(s+u_1)}\Leftrightarrow (\gamma^2-1)(\gamma^4-1)>0,\nonumber
\end{equation}
where the last inequality is true since $\gamma>1$, thus the first inequality also holds. From this inequality we obtain
\begin{equation}
\frac{1}{\sqrt{c_1^2-4(s+u_1)}}>\frac{1}{\sqrt{c^2-4(s+u_2)}}\nonumber
\end{equation}
and, since $c>c_1$
\begin{equation}
c+\sqrt{c^2-4(s+u_2)}>c_1+\sqrt{c_1^2-4(s+u_1)}.\nonumber
\end{equation}
By combining the last two inequalities we find
\begin{equation}
\frac{c+\sqrt{c^2-4(s+u_2)}}{\sqrt{c_1^2-4(s+u_1)}}>\frac{c_1+\sqrt{c_1^2-4(s+u_1)}}{\sqrt{c^2-4(s+u_2)}},\nonumber
\end{equation}
thus $l'_{+}(s)>0$.
\end{proof}

We begin our study from the one-switching solution and prove the following result.

\begin{proposition}
\label{one_switching}
There is only one extremal with only one switching ($n=0$). For large values of $\gamma$, the ratio $s$ corresponding to this extremal can be found by solving the transcendental equation (\ref{transcendentalX}) with the $+$ sign. In this limit, the corresponding time to reach the final point behaves as
\begin{equation}
\label{previous_bound}
\tau^+_1\rightarrow\gamma.
\end{equation}
\end{proposition}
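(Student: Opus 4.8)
The plan is to analyze the transcendental equation (\ref{transcendentalX}) for $n=0$, where it reads
\[
\frac{c+\sqrt{c^{2}-4(s+u_{2})}}{c_{1}\pm\sqrt{c_{1}^{2}-4(s+u_{1})}}=\frac{s+u_{2}}{s+u_{1}},
\]
and to pin down which sign admits a solution in $(0,s_m]$, then extract the $\gamma\to\infty$ asymptotics of $\tau^{+}_{1}=\tau^{+}_{i}+\tau_{f}$. First I would settle existence and uniqueness using Lemma \ref{monotonicity}: on $(0,s_m]$ the right-hand side $r_0(s)=(s+u_2)/(s+u_1)$ is strictly decreasing, $l_{-}(s)$ is also strictly decreasing, and $l_{+}(s)$ is strictly increasing. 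For the $+$ branch, comparing values at the endpoints $s\to 0^{+}$ and $s=s_m$ should show that $l_{+}-r_0$ changes sign exactly once, giving one root; for the $-$ branch the two decreasing functions need a finer comparison, but the claim (from Theorem~2 of \cite{Stefanatos2017a}) is that no admissible root exists there, so I would verify the $-$ branch has $l_{-}(s)>r_0(s)$ (or $<$) throughout $(0,s_m]$ by evaluating at one convenient point together with the monotonicity/derivative information already in hand. This establishes the first two sentences of the proposition.

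Next, for the asymptotics, recall $u_1=1/\gamma^4\to 0$, $u_2=1$, $c_1=u_1+1\to 1$, and $c=u_2\gamma^{2}+1/\gamma^{2}=\gamma^{2}+\gamma^{-2}\to\infty$. The key is to determine how $s$ scales with $\gamma$ along the $+$ branch. Since $s\le s_m=\tfrac14(1-u_1)^2\to\tfrac14$, the ratio $s$ stays bounded; I expect it to converge to a finite limit $s^{\ast}\in(0,\tfrac14]$ determined by taking $\gamma\to\infty$ in the $+$-sign equation. With $c\to\infty$, the left-hand side behaves like $\dfrac{2c}{c_1+\sqrt{c_1^2-4(s+u_1)}}\sim \dfrac{2c}{1+\sqrt{1-4s}}$ (using $u_1\to 0$, $c_1\to 1$), which diverges, while the right-hand side $\to (s+1)/s$ stays finite — so a naive limit is inconsistent, signalling that $s$ itself must approach the boundary value where $\sqrt{c_1^2-4(s+u_1)}$ or $\sqrt{c^{2}-4(s+u_2)}$ degenerates, or that the relevant balance is different. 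The honest resolution: $s\to s_m$ as $\gamma\to\infty$, i.e. $\sqrt{c_1^2-4(s+u_1)}=\sqrt{(1-u_1)^2-4s}\to 0$, so the $+$-branch denominator $c_1+\sqrt{\cdots}\to c_1\to 1$ and one solves $c+\sqrt{c^2-4(s+1)}=(s+1)(s+u_1)^{-1}c_1$ to leading order, fixing the rate at which $s\uparrow s_m$. I would substitute $s=s_m-\varepsilon$, expand both sides in $\varepsilon$ and $1/\gamma$, and match powers to get $\varepsilon$ as an explicit function of $\gamma$.

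With the scaling of $s$ in hand, I would plug into (\ref{time_in1}) and (\ref{time_fi}). For $\tau^{+}_{i}=\dfrac{1}{2\sqrt{u_1}}\cos^{-1}[\cdots]$, the prefactor is $\tfrac12\gamma^{2}$, which is huge, so the cosine argument must approach $1$ with an error of order $\gamma^{-4}$ for the product to stay $O(\gamma)$; I would Taylor-expand $\cos^{-1}(1-\delta)\approx\sqrt{2\delta}$ and show the bracket in (\ref{time_in1}) equals $1-O(\gamma^{-4})$, yielding $\tau^{+}_{i}\to\gamma$ (up to the constant to be computed). Meanwhile $\tau_f=\dfrac{1}{2\sqrt{u_2}}\cos^{-1}[\cdots]$ has $u_2=1$ and $c\to\infty$; since $\sqrt{c^2-4(s+u_2)}\sim c-2(s+1)/c\to c$, the argument $\dfrac{-sc+\sqrt{c^2-4(s+1)}}{(s+1)\sqrt{c^2-4}}$ tends to a finite limit and $\tau_f\to$ a finite constant, hence negligible against $\gamma$. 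Adding, $\tau^{+}_{1}=\tau^{+}_{i}+\tau_f\to\gamma$. The main obstacle I anticipate is the delicate asymptotic balance in the transcendental equation — establishing rigorously that $s\uparrow s_m$ and at exactly the right rate so that the $\tfrac12\gamma^{2}\cos^{-1}(\cdot)$ factor in $\tau^{+}_{i}$ produces the clean limit $\gamma$ rather than a different constant multiple; everything downstream is routine Taylor expansion once that rate is nailed down.
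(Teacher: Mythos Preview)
Your proposal contains a genuine error in the asymptotic analysis. You conclude that $s\to s_m=\tfrac14$ as $\gamma\to\infty$, but in fact $s\to 0$, with $s\sim 1/\gamma^{2}$. Your balancing argument goes astray because sending $s\to s_m$ does \emph{not} rescue the equation: even with $\sqrt{c_1^{2}-4(s+u_1)}\to 0$ the $+$-branch denominator only tends to $c_1\to 1$, so the left side is still $\sim 2c\sim 2\gamma^{2}$, while the right side $(s+1)/(s+u_1)\to (5/4)/(1/4)=5$ stays finite. The correct resolution is the possibility you passed over: let $s\to 0$ so that the \emph{right}-hand side $(s+u_2)/(s+u_1)\approx 1/s$ also diverges. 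Balancing $\gamma^{2}$ on the left against $1/s$ on the right gives $s\sim 1/\gamma^{2}$. (The limit $s\to\tfrac14$ you describe is what happens for the many-switching extremal $n=N$ in Theorem~\ref{main}, not for $n=0$.)

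Because of this, your downstream time estimates are also off. With $s\sim 1/\gamma^{2}$ the argument of $\cos^{-1}$ in $\tau_i^{+}$ is $1-2/\gamma^{2}$ (not $1-O(\gamma^{-4})$; note that $\tfrac12\gamma^{2}\sqrt{2\delta}\sim\gamma$ forces $\delta\sim\gamma^{-2}$, not $\gamma^{-4}$), yielding $\tau_i^{+}\to\gamma$; and the argument in $\tau_f$ tends to $1$, so $\tau_f\to 0$, not to a nonzero constant. The paper sidesteps the delicate asymptotic matching entirely: it computes the switching point \emph{explicitly} as the unique intersection of the two constant-control curves $x_2^{2}+u_j x_1^{2}+1/x_1^{2}=c_j$ ($j=1,2$), obtains the closed form
\[
s=\frac{c_1 u_2-c\,u_1}{c-c_1}-\left(\frac{u_2-u_1}{c-c_1}\right)^{2},
\]
and reads off $s\to 1/\gamma^{2}-1/\gamma^{4}$ directly. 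This geometric shortcut also delivers uniqueness of the one-switching extremal without a separate analysis of the $-$ branch; the $+$ sign is then identified a posteriori by the endpoint check you outlined.
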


\begin{proof}
The one-switching extremal is composed by two segments, one with $u=u_1$ and one with $u=u_2$. From system equations (\ref{system1}), (\ref{system2}) and the fact that the initial and final points, $(1,0)$ and $(\gamma,0)$, belong to the first and the second segments, respectively, we can easily obtain the equations of these segments
\begin{eqnarray*}
x_2^2+u_1x_1^2+\frac{1}{x_1^2}&=&c_1,\\
x_2^2+u_2x_1^2+\frac{1}{x_1^2}&=&c,
\end{eqnarray*}
where the constants $c,c_1$ are given in Theorem \ref{solution}. The above system of equations has only one solution
\begin{equation}
\label{switching_point}
x_1=\sqrt{\frac{c-c_1}{u_2-u_1}},\quad x_2=\sqrt{c-u_2x_1^2-\frac{1}{x_1^2}},
\end{equation}
since it is always $x_1>0$ when starting from $x_1(0)=1>0$, while only the solution $x_2>0$ is acceptable for motion from $(1,0)$ to $(\gamma,0)$. Depending on the value of $\gamma$, this unique solution can be the solution of the transcendental equation (\ref{transcendentalX}) with $n=0$ and only the $+$ or only the $-$ sign. We will prove that, for large values of $\gamma$, it is the solution of the transcendental equation with the $+$ sign. We evaluate the left and right hand sides of (\ref{transcendentalX}), with $n=0$ and the $+$ sign, at the bounds of the interval $0<s\leq s_m=(1-u_{1})^{2}/4$, and find
\[
r_0(0)=\frac{u_2}{u_1}=\gamma^4,\quad r_0(s_m)\rightarrow 5,
\]
and
\[
l_+(0)=\gamma^2,\quad l_+(s_m)\rightarrow 2\gamma^2,
\]
where the limits correspond to large values of $\gamma$. Observe that $r_0(0)>l_+(0)$, while for large $\gamma$ it is also $r_0(s_m)<l_+(s_m)$. Since $r_0(s), l_+(s)$ are decreasing and increasing, respectively, continuous functions of $s$, the corresponding transcendental equation has a unique solution in the above interval, corresponding to the unique solution with one switching. By either manipulating the transcendental equation (\ref{transcendentalX}) with $n=0$ or using directly the coordinates of the switching point given in (\ref{switching_point}), we find the solution
\[
s=\frac{x_2^2}{x_1^2}=\frac{c_1u_2-cu_1}{c-c_1}-\left(\frac{u_2-u_1}{c-c_1}\right)^2,
\]
from which we obtain
\begin{equation}
\label{s_limit_0}
s\rightarrow\frac{1}{\gamma^2}-\frac{1}{\gamma^4}
\end{equation}
for large $\gamma$. By using the above limit in (\ref{time_in1}) and (\ref{time_fi}), for the times spent on the first and second segments, respectively, we obtain
\begin{eqnarray*}
\tau^+_i & \rightarrow & \frac{\gamma^2}{2}\cos^{-1}\left(1-\frac{2}{\gamma^2}\right)\rightarrow\frac{\gamma^2}{2}\cdot\frac{2}{\gamma}=\gamma,\\
\tau_f & \rightarrow &\frac{1}{2}\cos^{-1}(1)=0,
\end{eqnarray*}
so the total time $\tau^+_1=\tau^+_i+\tau_f$ behaves as in (\ref{previous_bound}) for large values of $\gamma$.
\end{proof}

Note that we have normalized time with the frequency $\omega_h$, thus the limiting value of the total time is actually $\gamma/\omega_h=1/\sqrt{\omega_h\omega_c}$, and we recover the value from (\ref{tau_limit}) as expected.

We now move to the case with more switchings and prove the following theorem, which is the main technical result of this work

\begin{theorem}
\label{main}
For large enough $\gamma$, there exists a positive integer $N$ in the interval
\begin{equation}
\label{N_limits}
\frac{2}{\ln5}\ln\gamma+\frac{\ln2}{\ln5}-2<N<\frac{2}{\ln5}\ln\gamma+\frac{\ln2}{\ln5}-1
\end{equation}
such that
\begin{equation}
\label{logarithmic_bound}
\tau^+_{2N+1}\rightarrow 1+\frac{1}{2}\cos^{-1}\left(\frac{3}{5}\right)+N\left\{2+\frac{1}{2}\left[\pi+\cos^{-1}\left(\frac{3}{5}\right)\right]\right\}.
\end{equation}
\end{theorem}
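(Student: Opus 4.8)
The plan is to analyze the transcendental equation (\ref{transcendentalX}) with the $+$ sign in the asymptotic regime $\gamma\to\infty$, and to locate the integer $N$ for which the equation admits a solution $s$ exactly at (or rather, limiting to) the boundary value $s_m=(1-u_1)^2/4\to 1/4$. The intuition is as follows: as $n$ increases, the right hand side $r_n(s)$ at any fixed $s\in(0,s_m]$ grows (because $(s+u_2)/(s+u_1)>1$ and we raise it to the power $n+1$), so for the equation $l_+(s)=r_n(s)$ to keep having a solution in the admissible interval, $s$ must drift. The maximal admissible $n$ is the largest one for which $r_n(s_m)\le l_+(s_m)$; beyond that, $r_n$ overtakes $l_+$ everywhere on $(0,s_m]$ and no extremal with $2n+1$ switchings exists. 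So the first step is to compute the limiting values of $l_+$ and $r_n$ at the endpoints of $(0,s_m]$ as $\gamma\to\infty$, extending the endpoint evaluations already carried out for $n=0$ in Proposition \ref{one_switching}.

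For the second step I would evaluate, for general $n$, the quantities at $s=s_m\to 1/4$ with $u_1\to 0$, $u_2=1$: there $r_n(s_m)\to ((s_m+1)/s_m)^{n+1}=((1/4+1)/(1/4))^{n+1}=5^{n+1}$, and $l_+(s_m)\to 2\gamma^2$ (the same value found in the proposition, since $c\to\gamma^2$, $c_1\to 1$, and the square roots simplify). The existence criterion for an extremal with $2n+1$ switchings is therefore, in the limit, $5^{n+1}\le 2\gamma^2$, i.e. $n+1\le \log_5(2\gamma^2)=\frac{2\ln\gamma+\ln 2}{\ln 5}$. Taking $N$ to be the largest such integer gives $N = \lfloor \frac{2\ln\gamma+\ln 2}{\ln 5}\rfloor - 1$, which sits in the interval (\ref{N_limits}) — the two strict inequalities there are precisely the floor bounds $\frac{2\ln\gamma+\ln2}{\ln5}-1 < N+1 \le \frac{2\ln\gamma+\ln2}{\ln5}$ rewritten (with the lower bound made strict by genericity of $\gamma$). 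One also needs to check $r_n(0)>l_+(0)$ in the limit — $r_n(0)\to(u_2/u_1)^{n+1}=\gamma^{4(n+1)}$ while $l_+(0)\to\gamma^2$ — so that, by Lemma \ref{monotonicity} (monotone decreasing $r_n$, increasing $l_+$) and the intermediate value theorem, a unique solution $s=s_N\in(0,s_m]$ exists for each $n\le N$, and for $n=N$ that solution limits to $s_m\to 1/4$.

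The third step is to substitute $s\to s_m=1/4$, $u_1\to 0$, $u_2=1$, $\gamma\to\infty$ into the time formulas. From (\ref{switch_X}), $\tau_{u_1}\to \frac{1}{2\sqrt{u_1}}\cos^{-1}((s-u_1)/(s+u_1))\to \frac{1}{2\sqrt{u_1}}\cdot(2\sqrt{u_1/s}+\dots)$; more carefully, $\cos^{-1}\!\big(\frac{s-u_1}{s+u_1}\big)=\cos^{-1}\!\big(1-\frac{2u_1}{s+u_1}\big)\approx 2\sqrt{u_1/s}$ as $u_1\to 0$, so $\tau_{u_1}\to \frac{1}{\sqrt{s}}=2$. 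From (\ref{switch_Y}), $\tau_{u_2}\to \frac12[2\pi-\cos^{-1}((1/4-1)/(1/4+1))]=\frac12[2\pi-\cos^{-1}(-3/5)]=\frac12[\pi+\cos^{-1}(3/5)]$ (using $\cos^{-1}(-x)=\pi-\cos^{-1}(x)$). From (\ref{time_fi}) with $c\to\gamma^2$, $u_2=1$, $s\to1/4$: the argument of $\cos^{-1}$ tends to $(c^2-\dots)/(c^2\cdot\dots)$ type expression; one expects $\tau_f\to\frac12\cos^{-1}(3/5)$ — the sign inside works out to $+3/5$ because of the $-sc$ term dominating structure once the leading $c^2$ cancels. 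And from (\ref{time_in1}) with the $+$ sign and $s\to s_m$, $u_1\to 0$: since at $s=s_m$ we have $c_1^2-4(s+u_1)\to 1-1=0$, the square root $\sqrt{c_1^2-4(s+u_1)}\to 0$, so the argument of $\cos^{-1}$ in (\ref{time_in1}) tends to $s c_1/((s+u_1)\sqrt{c_1^2-4u_1})\to (1/4)(1)/((1/4)(1))=1$, whence $\cos^{-1}(1)=0$; but the prefactor $\frac{1}{2\sqrt{u_1}}\to\infty$, so this is a $0\cdot\infty$ indeterminate form requiring the same $\cos^{-1}(1-\epsilon)\approx\sqrt{2\epsilon}$ expansion. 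Tracking the rate at which $s_N\to s_m$ and at which $c_1^2-4u_1\to 1$ as functions of $\gamma$, I expect $\tau_i^+\to 1$. Assembling, $\tau^+_{2N+1}=\tau^+_i+N(\tau_{u_1}+\tau_{u_2})+\tau_f\to 1 + N\big(2+\frac12[\pi+\cos^{-1}(3/5)]\big)+\frac12\cos^{-1}(3/5)$, which is exactly (\ref{logarithmic_bound}).

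\textbf{Main obstacle.} The delicate point is the evaluation of $\tau^+_i$ in (\ref{time_in1}): it is a genuine $0\cdot\infty$ limit, and getting the finite value $1$ requires controlling the precise rate at which the solution $s_N$ of the transcendental equation approaches $s_m$ as $\gamma\to\infty$ (equivalently, how close $r_N(s_m)=5^{N+1}$ sits below $l_+(s_m)=2\gamma^2$, which depends on the fractional part of $\log_5(2\gamma^2)$). One must show this rate is matched by the vanishing of $\sqrt{c_1^2-4u_1}$ relative to $\sqrt{u_1}$ so that the product converges — and, strictly, show it converges to the stated constant rather than merely staying bounded. A secondary subtlety is justifying that among all extremals with $2n+1$ switchings for $n\le N$, the relevant one for the cooling-time bound is the maximal-$n$ one with $s\to s_m$, and that the $+$ sign (not the $-$ sign) is the correct branch in this regime — this should follow by the same endpoint-comparison argument used in Proposition \ref{one_switching}, extended uniformly in $n$ via Lemma \ref{monotonicity}.
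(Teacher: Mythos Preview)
Your overall strategy matches the paper's: locate $N$ via the endpoint inequality $r_N(s_m)<l_+(s_m)<r_{N+1}(s_m)$, deduce $s\to 1/4$, and then evaluate each time in (\ref{time_odd}). Your computations of $\tau_{u_1}\to 2$, $\tau_{u_2}\to\frac12[\pi+\cos^{-1}(3/5)]$, and $\tau_f\to\frac12\cos^{-1}(3/5)$ are correct and agree with the paper.

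However, your ``main obstacle'' is based on a miscalculation, and the concern about the rate $s_N\to s_m$ (and its dependence on the fractional part of $\log_5(2\gamma^2)$) is a phantom. You set $u_1=0$ too early when evaluating the argument of $\cos^{-1}$ in (\ref{time_in1}): the correct procedure is to keep $u_1=1/\gamma^4$ in the algebra. Writing $c_1=1+u_1$ and $\sqrt{c_1^2-4u_1}=1-u_1$, one finds
\[
1-\text{(argument)}=\frac{u_1\big(1-2s-u_1+\sqrt{(1-u_1)^2-4s}\big)}{(s+u_1)(1-u_1)}.
\]
With $s\to 1/4$ the bracket tends to $1/2+0$ and the denominator to $1/4$, so the argument is $1-2u_1+o(u_1)$, \emph{not} $1$. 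Hence $\cos^{-1}(1-2u_1)\sim 2\sqrt{u_1}$ and $\tau_i^+\to\frac{1}{2\sqrt{u_1}}\cdot 2\sqrt{u_1}=1$ directly. The square-root term $\sqrt{(1-u_1)^2-4s}=2\sqrt{s_m-s}$ is already multiplied by $u_1$, so it contributes only $o(u_1)$ regardless of the precise rate at which $s\to s_m$; all you need is $s\to 1/4$.

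That brings up the one genuine gap in your outline: you assert ``for $n=N$ that solution limits to $s_m\to 1/4$'' but do not prove it. The paper closes this by a sandwich: define $\hat{s}$ via $r_N(\hat{s})=l_+(s_m)$, use the monotonicity from Lemma~\ref{monotonicity} to get $\hat{s}\le s\le s_m$, and then bound $\hat{s}$ from below via $r_N(\hat{s})=l_+(s_m)<5^{N+2}$, which yields $\hat s>\big(1-5^{1+1/(N+1)}/\gamma^4\big)/\big(5^{1+1/(N+1)}-1\big)\to 1/4$. This squeeze is what makes the limit $s\to 1/4$ rigorous and independent of where $2\gamma^2$ falls in the interval $(5^{N+1},5^{N+2})$.
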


\begin{proof}
First of all, observe that
\begin{eqnarray*}
r_n(0)&=&\left(\frac{u_2}{u_1}\right)^{n+1}=\gamma^{4(n+1)},\\
r_n(s_m)&=&\left(\frac{s_m+u_2}{s_m+u_1}\right)^{n+1}\rightarrow 5^{n+1},
\end{eqnarray*}
where the limit corresponds to large values of $\gamma$. Now, let $N$ be the largest positive integer such that the transcendental equation (\ref{transcendentalX}) with $n=N$ and the $+$ sign has a solution. Since
\[
r_N(0)=\gamma^{4(N+1)}>\gamma^2=l_+(0)
\]
and $l_+, r_N$ are respectively increasing and decreasing functions of $s$, such an integer should satisfy
\[
r_N(s_m)<l_+(s_m)<r_{N+1}(s_m),
\]
which assures that the transcendental equation with $n=N$ has a solution while the equation with $n=N+1$ does not.
For large $\gamma$, the above inequalities take the form
\[
5^{N+1}<2\gamma^2<5^{N+2},
\]
thus such an integer always exists in this limit and is determined by the relation
\begin{equation}
\label{N}
\frac{\ln(2\gamma^2)}{\ln 5}-2<N<\frac{\ln(2\gamma^2)}{\ln 5}-1.
\end{equation}

We next find an approximation $\hat{s}$ for the unique solution $s\leq s_m$ of the transcendental equation with the $+$ sign and $n=N$, which can be compactly expressed as
\begin{equation}
\label{trans}
l_+(s)=r_N(s).
\end{equation}
Let $\hat{s}$ be the unique solution of the equation
\begin{equation}
\label{hat_equation}
r_N(\hat{s})=l_+(s_m),
\end{equation}
where note that $l_+(s_m)$ is constant. Such a $\hat{s}$ always exists for large $\gamma$, since in this limit
\[
r_N(s_m)<l_+(s_m)\rightarrow 2\gamma^2<\gamma^{4(N+1)}=r_N(0),
\]
i.e. the constant value $l_+(s_m)$ lies between the minimum and maximum values of the continuous function $r_N$. Since $l_+$ is an increasing function and $s\leq s_m$, it is also $l_+(s)\leq l_+(s_m)$ thus, combining (\ref{trans}) and (\ref{hat_equation}), we obtain
\[
r_N(s)=l_+(s)\leq l_+(s_m)=r_N(\hat{s}),
\]
which leads to the conclusion that $s\geq\hat{s}$, since $r_N$ is decreasing. Now observe that for large $\gamma$ we have
\[
\left(\frac{\hat{s}+u_{2}}{\hat{s}+u_{1}}\right)^{N+1}=r_N(\hat{s})=l_+(s_m)\rightarrow 2\gamma^2<5^{N+2},
\]
from which we obtain
\[
\hat{s}>\frac{1-5^{\left(1+\frac{1}{N+1}\right)}\cdot\frac{1}{\gamma^4}}{5^{\left(1+\frac{1}{N+1}\right)}-1}.
\]
Thus
\begin{equation}
\label{s_limits}
\frac{1-5^{\left(1+\frac{1}{N+1}\right)}\cdot\frac{1}{\gamma^4}}{5^{\left(1+\frac{1}{N+1}\right)}-1}<\hat{s}\leq s\leq s_m=\frac{1}{4}(1-\frac{1}{\gamma^4})^{2}.
\end{equation}
Obviously, in the limit of large $\gamma$, where also $N$ is large, the solution of the transcendental equation (\ref{trans}) is approaching the upper bound
\begin{equation}
\label{s_limit}
s\rightarrow\frac{1}{4}.
\end{equation}
\begin{figure}[t]
\centering
\includegraphics[width=8 cm]{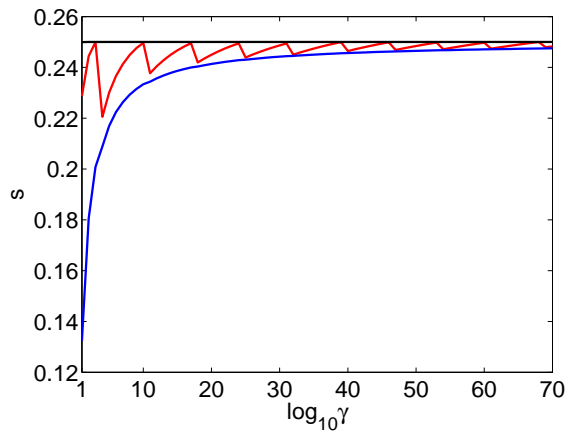}
\caption{Solutions of transcendental equation (\ref{transcendentalX}) (red line), with the $+$ sign and for the integer $n=N$ defined in (\ref{N_limits}), for various values of increasing $\gamma$. The lower (blue line) and upper (black line) bounds from (\ref{s_limits}) are also displayed. Observe that the solution converges to the limiting value $1/4$.}
\label{fig:limit}
\end{figure}
In Fig. \ref{fig:limit} we plot the solutions of transcendental equation (\ref{transcendentalX}), with the $+$ sign and for the integer $n=N$ defined in (\ref{N_limits}), for various values of increasing $\gamma$. We also display the lower and upper bounds from (\ref{s_limits}). Observe that the solution converges to the limiting value (\ref{s_limit}).

We can now use this limit to evaluate the times from Theorem \ref{solution}. We start from $\tau^+_i$. Using the relation $c_1=1+u_1$, the argument of $\cos^{-1}$ in (\ref{time_in1}) becomes
\[
\frac{sc_{1}-u_{1}\sqrt{c_{1}^{2}-4(s+u_{1})}}{(s+u_{1})\sqrt{c_{1}^{2}-4u_{1}}}=\frac{s+u_{1}\left[s-\sqrt{(1-u_1)^{2}-4s}\right]}{(s+u_{1})(1-u_1)}.
\]
For the expression within the brackets, which is already multiplied by the small quantity $u_1=1/\gamma^4$, it is $s\rightarrow 1/4$ and $\sqrt{(1-u_1)^{2}-4s}\rightarrow 0$, so we keep only the first term. The remaining part can be reformulated as
\[
\frac{s(1+u_1)}{(s+u_{1})(1-u_1)}=1+u_1\frac{2s-1+u_1}{(s+u_{1})(1-u_1)}\rightarrow 1-\frac{2}{\gamma^4},
\]
in the limit of large $\gamma$. For the same limit we obtain from (\ref{time_in1})
\[
\tau^+_i\rightarrow\frac{\gamma^2}{2}\cos^{-1}\left(1-\frac{2}{\gamma^4}\right)\rightarrow\frac{\gamma^2}{2}\cdot\frac{2}{\gamma^2}=1.
\]
Observe that the limit (\ref{s_limit}), corresponding to a slope 1/2 for the switching lines in Fig. \ref{extremal}, leads to a finite limiting value for $\tau^+_i$, contrary to the previous case with only one switching, where the limit (\ref{s_limit_0}), corresponding to a zero slope, leads to a value proportional to $\gamma$. We next find the limiting value of $\tau_{u_1}$. The argument of $\cos^{-1}$ in (\ref{switch_X}) can be written as
\[
\frac{s-u_1}{s+u_1}=1-\frac{2u_1}{s+u_1}\rightarrow 1-\frac{8}{\gamma^4},
\]
in the limit of large $\gamma$, thus
\[
\tau_{u_1}\rightarrow\frac{\gamma^2}{2}\cos^{-1}\left(1-\frac{8}{\gamma^4}\right)\rightarrow\frac{\gamma^2}{2}\cdot\frac{4}{\gamma^2}=2,
\]
i.e. it converges also to a finite value. For $\tau_{u_2}$ it is rather easy to confirm from (\ref{switch_Y}) that
\[
\tau_{u_2}\rightarrow\frac{1}{2}\left[2\pi-\cos^{-1}\left(-\frac{3}{5}\right)\right]=\frac{1}{2}\left[\pi+\cos^{-1}\left(\frac{3}{5}\right)\right].
\]
Finally, observe that for large $\gamma$ it is $\sqrt{c^{2}-4(s+u_{2})}\rightarrow c$, $\sqrt{c^{2}-4u_{2}}\rightarrow c$, and thus
\[
\frac{-sc+u_{2}\sqrt{c^{2}-4(s+u_{2})}}{(s+u_{2})\sqrt{c^{2}-4u_{2}}}\rightarrow\frac{-sc+u_2c}{(s+u_2)c}=\frac{-s+u_2}{s+u_2}\rightarrow\frac{3}{5},
\]
so from (\ref{time_fi}) we find
\[
\tau_f\rightarrow\frac{1}{2}\cos^{-1}\left(\frac{3}{5}\right).
\]

For large values of $\gamma$, the total duration of the extremal with $2N+1$ switchings approaches the value $\tau^+_{2N+1}\rightarrow\tau^+_{i}+N(\tau_{u_1}+\tau_{u_2})+\tau_{f}$ with the above calculated limiting values for the times spent on the initial, intermediate and final segments, and the limit (\ref{logarithmic_bound}) is obtained. Additionally, the bounds (\ref{N_limits}) for $N$ are derived from (\ref{N}).
\end{proof}

\section{Exponential bound on the minimum achievable temperature}

Note first that Theorem \ref{main} is a clear manifestation of the unattainability principle, if the number of steps to reach absolute zero is identified with the number of switchings $2N+1$. In the limit $\omega_c\rightarrow 0$, corresponding to $T_c\rightarrow 0$, it is $\ln\gamma\rightarrow\infty$ and also $N\rightarrow\infty$ from (\ref{N_limits}), thus the number of steps becomes infinite. In this limit and by combining (\ref{logarithmic_bound}), (\ref{N_limits}), (\ref{gamma}) and (\ref{T_c}), we find that a low temperature $T_c$ can be obtained within a time $\tau$ given by
\begin{equation}
\label{min_time}
\tau\rightarrow\tau_0\ln\left(\frac{T_h}{T_c}\right),
\end{equation}
where
\begin{equation}
\label{tau0}
\tau_0=\frac{2+\frac{1}{2}\left[\pi+\cos^{-1}\left(\frac{3}{5}\right)\right]}{\ln5}\approx 2.5.
\end{equation}
Inversely, with a long process of finite duration $\tau$ can be obtained a temperature $T_c$ as low as
\begin{equation}
\label{min_temp}
T_c\rightarrow T_he^{-\frac{\tau}{\tau_0}}.
\end{equation}
This bound is obviously better than the corresponding one in (\ref{min_energy_1}) for the same system, and is also better than the bound $\tau^{-7}$ reported in \cite{Masanes17} for a general quantum system.

Here we would like to point out that a logarithmic bound for the cooling time was obtained in \cite{Hoffmann_EPL11}, by taking the limit $\gamma\rightarrow\infty$ of the minimum times already calculated in \cite{Stefanatos10,Stefanatos11}, but for the more general case where the potential is permitted to become repulsive for some time intervals, i.e. when the stiffness is allowed to vary in the range
\[
-\omega^2_h\leq\omega^2(t)\leq\omega^2_h.
\]
For the more restrictive and practically relevant case where $\omega(t)$ is bounded as in (\ref{frequency}), these bounds for time and temperature are reported here for the first time.

\section{OUTLOOK}

Following the above analysis arises naturally the question of whether it is possible to prove the exponential bound for more general quantum systems. We believe that it might be worthwhile an attempt of a proof along the lines of the present work. If successful, this would be another example where Mathematics can be used to further infer valuable information from the laws of Thermodynamics, the premier example being of course the proof by Carath\'{e}odory that the second law, stated in the form that not every equilibrium state of a composite system is reachable along adiabatic paths of equilibria, implies the existence of temperature and entropy functions for the composite system \cite{Caratheodory09} (stated in the modern language of differential geometry, Carath\'{e}odory proved that the above law implies that the one-form $\tilde{\delta}Q$, defined from the conservation of energy as the sum of the individual internal energies and works performed by the degrees of freedom composing the system, is integrable, i.e. there are functions $T$ (temperature) and $S$ (entropy) such that $\tilde{\delta}Q=T\tilde{d}S$ \cite{Schutz80}). We expect that the current work will contribute to the active discussion about the quantification of the third thermodynamic law.


\begin{thebibliography}{99}

\bibitem{Nernst12}
W. Nernst, Sitzber. Kgl. Preuss. Akad. Wiss. Physik-Math. Kl., 134 (1912).

\bibitem{Rezek09}
Y. Rezek, P. Salamon, K.-H. Hoffmann and R. Kosloff, EPL 85, 30008 (2009).

\bibitem{Hoffmann_EPL11}
K.-H. Hoffmann, P. Salamon, Y. Rezek, and R. Kosloff, EPL 96, 60015 (2011).

\bibitem{Levy12}
A. Levy, R. Alicki, and R. Kosloff, Phys. Rev. E 85, 061126 (2012).

\bibitem{TorronKos13}
E. Torrontegui and R. Kosloff, Phys. Rev. E 88, 032103 (2013).

\bibitem{Masanes17}
L. Masanes and J. Oppenheim, Nat. Commun. 8, 14538 (2017).

\bibitem{Stefanatos2017a}
D. Stefanatos, IEEE Trans. Automat. Control, doi: 10.1109/TAC.2017.2684083 (2017).

\bibitem{Rezek06}
Y. Rezek and R. Kosloff, New J. Phys. 8, 83 (2006).

\bibitem {Salamon09}
P. Salamon, K.-H. Hoffmann, Y. Rezek, and R. Kosloff, Phys. Chem. Chem. Phys. 11, 1027 (2009).

\bibitem{Tsirlin11}
A.M. Tsirlin, P. Salamon, and K.-H. Hoffmann, Autom. Remote Control 72, 1627 (2011).

\bibitem {Salamon12}
P. Salamon, K.-H. Hoffmann, and A. Tsirlin, Appl. Math. Lett. 25, 1263 (2012).

\bibitem{Hoffmann13}
K.-H. Hoffmann, B. Andresen, and P. Salamon, Phys. Rev. E 87, 062106 (2013).

\bibitem{Hoffmann15}
K.-H. Hoffmann, K. Schmidt, and P. Salamon, J. Non-Equilib. Thermodyn. 39, 113 (2015).

\bibitem{Boldt16}
F. Boldt, P. Salamon, and K.-H. Hoffmann, J. Phys. Chem. A 120, 3218 (2016).

\bibitem{Stefanatos2017b}
D. Stefanatos, SIAM J. Control Optim. 55, 1429 (2017).


\bibitem {Chen10}
X. Chen, A. Ruschhaupt, S. Schmidt, A. del Campo, D.
Gu\'{e}ry-Odelin, and J.G. Muga, Phys. Rev. Lett. 104, 063002 (2010).

\bibitem{Zulkowski12}
P.R. Zulkowski, D.A. Sivak, G.E. Crooks, M.R. DeWeese, Phys. Rev. E 86, 041148 (2012).

\bibitem{shortcuts13}
E. Torrontegui, S. Ib\'{a}\~{n}ez, S. Mart\'{i}nez-Garaot, M. Modugno, A. del Campo, D. Gu\'{e}ry-Odelin, A. Ruschhaupt, X. Chen, and J.G. Muga, Adv. At. Mol. Opt. Phys. 62, 117-169 (2013).

\bibitem{Deng13}
J. Deng, Q.-H. Wang, Z. Liu, P. Hanggi, and J. Gong, Phys. Rev. E 88, 062122 (2013).

\bibitem{Deffner13}
S. Deffner and E. Lutz, Phys. Rev. E 87, 022143 (2013).

\bibitem{Campo14}
A. del Campo, J. Goold, and M. Paternostro, Sci. Rep. 4, 6208 (2014).

\bibitem{Abah16}
O. Abah and E. Lutz, EPL 113, 60002 (2016).

\bibitem{Kosloff17}
Y. Rezek and R. Kosloff, Entropy 19, 136 (2017).

\bibitem{Merzbacher98}
E. Merzbacher, {\it Quantum Mechanics}, (John Wiley and Sons, New York, 1998).

\bibitem{Boldt13}
F. Boldt, J.D. Nulton, B. Andresen, P. Salamon, and K.H. Hoffmann, Phys. Rev. A 87, 022116 (2013).

\bibitem{Stefanatos10}
D Stefanatos, J Ruths, and J.-S. Li, Phys. Rev. A 82, 063422 (2010).

\bibitem{Stefanatos11}
D. Stefanatos, H. Schaettler, and J.-S. Li, SIAM J. Control Optim. 49, 2440 (2011).

\bibitem{Caratheodory09}
C. Carath\'{e}odory, Math. Ann. 67, 355 (1909).

\bibitem{Schutz80}
B. Schutz, {\it Geometrical Methods of Mathematical Physics}, (Cambridge University Press, Cambridge, 1980).





\end{thebibliography}
\end{document}